\documentclass[a4paper,UKenglish,cleveref, autoref, thm-restate]{lipics-v2021}

\title{Table-based Quantifier Elimination}
\titlerunning{Table-based Quantifier Elimination}

\author{Pierre Carbonnelle}{Independent Researcher, Brussels, Belgium}{pierre.carbonnelle@gmail.com}{https://orcid.org/0000-0002-7074-1870}{}

\authorrunning{Pierre Carbonnelle}

\date{January 2026}
\Copyright{Pierre Carbonnelle}

\ccsdesc[500]{Theory of computation~Constraint and logic programming}
\keywords{SAT, SMT, ASP, Grounding} 
\category{Long paper} 

\relatedversion{} 

\acknowledgements{I thank Joost Vennekens for his suggestion to use relational database management systems for grounding,
and Maurice Bruynooghe for his comments on early drafts.}

\nolinenumbers

\EventEditors{John Q. Open and Joan R. Access}
\EventNoEds{2}
\EventLongTitle{42nd Conference on Very Important Topics (CVIT 2016)}
\EventShortTitle{CVIT 2016}
\EventAcronym{CVIT}
\EventYear{2016}
\EventDate{December 24--27, 2016}
\EventLocation{Little Whinging, United Kingdom}
\EventLogo{}
\SeriesVolume{42}
\ArticleNo{23}

\usepackage{amssymb}

\usepackage{unicode-math}
[    Extension = .otf,
      BoldFont = XITSMath-Bold,
]
\newcommand\fullouterjoin{\fullouterjoin}

\usepackage{graphicx} 

\usepackage{amsmath}
\usepackage{booktabs}
\usepackage{cancel}
\usepackage{cellspace} 
\usepackage{dsfont}
\usepackage{fontspec}
\usepackage{graphicx}
\usepackage{makecell}
\usepackage{pdflscape}
\usepackage{stmaryrd}
\usepackage{threeparttable}
\usepackage{tikz}
\usepackage{todonotes}
\usetikzlibrary{patterns} 
\usepackage{xcolor}
\usepackage{xparse}
\usepackage{xspace}

\newcommand\mm[1]{\ensuremath{\mathit{#1}}\xspace}
\newcommand{\proj}[1]{\Pi_{\textstyle #1}\,}

\newcommand{\define}[1]{\colorbox{lightgray}{\textit{#1}}}

\newcommand{\snot}[0]{\mm{NOT}}
\newcommand{\sor}[0]{\mm{OR}}
\newcommand{\sand}[0]{\mm{AND}}
\newcommand{\sequals}[0]{\underline{=}}
\newcommand{\slt}[0]{\underline{<}}
\newcommand{\sleq}[0]{\underline{\leq}}
\newcommand{\sgt}[0]{\underline{>}}
\newcommand{\sgeq}[0]{\underline{\geq}}

\begin{document}

\maketitle
\begin{abstract}
    Quantifiers make first-order logic more expressive than propositional logic, but they also make solving satisfiability problems more difficult.  To solve these problems efficiently, many techniques have been developed to eliminate quantifiers from logic formulas: model-based quantifier instantiation, E-matching, Skolemization, destructive equality resolution, ...

    I propose a table-based instantiation method for quantifier elimination.
    It can be used to solve satisfiability problems for ``guarded'' formulas, i.e., formulas where the use of defined functions in the matrix of quantifications effectively restricts the relevant range of their variables to a finite set.

    I have implemented this method in a pre-processor for SMT solvers.  This ``grounder'' leverages an embedded relational database (SQLite) to execute relational operations efficiently.  While this grounder does not accelerate 2025 benchmarks of the SMT-LIB library, it does improve performance of SMT solvers on public benchmarks for SAT and ASP solvers.  The grounder has applications in, e.g., solving graph problems.
\end{abstract}

\section{Introduction}


Quantifiers make first-order logic more expressive than propositional logic, but they also make solving satisfiability problems more difficult.  To solve these problems efficiently, many techniques have been developed to eliminate quantifiers from logic formulas: model-based quantifier instantiation, E-matching, Skolemization, destructive equality resolution, ...

This work investigates quantifier elimination for satisfiability problems when the use of defined functions in the matrix (i.e., the sub-formula following the quantifier) of quantifications effectively restricts the relevant range of the quantified variables to a finite set.
Graph problems are good examples: statements about edges are encoded as formula quantified over pairs of nodes, but restricted to the pairs that are edges.

\begin{example}
    As a simple illustration, consider the graph-coloring problem for a directed graph with two edges,
    $1 \rightarrow 2$ and $2 \rightarrow 3$, where $1, 2, 3$ denote nodes.
    A natural specification introduces an  \texttt{edge} predicate
    and a \texttt{colorOf} function.
    The interpretation of \texttt{edge} is known: it can be encoded in the SMT-LIB language~\cite{BarFT-RR-25} using a definition:
    \begin{lstlisting}
        (define-fun edge ((x Int) (y Int))
            (or (and (= x 1) (= y 2))
                (and (= x 2) (= y 3)) ))
    \end{lstlisting}

    The assertion that nodes connected by an edge cannot have the same color is:
    \begin{lstlisting}
        (assert (forall ((x Int) (y Int))
                    (=> (edge x y)
                        (not (= (colorOf x) (colorOf y))))))
    \end{lstlisting}
    The definition of \texttt{edge} effectively restricts the relevant range of the quantification to $(1,2)$ and $(2,3)$:
    for any other substitution, the matrix of the quantification is true because the premise \texttt{(edge x y)} is false.\lipicsEnd
\end{example}

In our experiments, the cvc5 solver cannot solve this problem when the graph is larger than 50, in any encoding that uses a quantifier.
On the other hand, the Z3 solver can solve it, and Model-Based Quantifier Instantiation is fastest.  However, it does not scale well, per our experiments.



In this paper, I present a method that exploits the definitions of user-declared function symbols to derive, when possible, a finite set of substitutions for the expansion of quantified formulas.
Further, when several such defined symbols occur in a quantified formula, it uses relational algebra to further reduce the set of substitutions.
I call this method ``table-based quantifier elimination''.
The hypothesis is that using it can accelerate solving when the use of defined functions in the matrix of quantifications effectively restricts the relevant range of the quantifications to a finite set.

To test this hypothesis, I have implemented this method in a ``grounder'' called \texttt{xmt-lib}.
It is a preprocessor that transforms an SMT-LIB script into an equisatisfiable script in a Quantifier Free (QF) logic, when possible.
In our example, it replaces the quantified formula by the finite number of assertions shown in \cref{lst:assertions}:

\begin{lstlisting}[caption={Assertions generated by xmt-lib for the graph coloring example.}, label={lst:assertions}]
    (assert (not (= (colorOf 1) (colorOf 2))))
    (assert (not (= (colorOf 2) (colorOf 3))))
\end{lstlisting}

The implementation leverages a relational database engine (SQLite)
to perform the required relational operations efficiently,
and can directly exploit input data already stored in the database.
A demo version is available online.\footnote{https://pcarbonn.github.io/XMT-IDE}

I evaluate \texttt{xmt-lib} on the graph coloring problem above, and on benchmarks specifically developed to evaluate grounders~\cite{DBLP:conf/jelia/LaerVV25}.
The results show that \texttt{xmt-lib} can significantly improve
the performance of SMT solvers.
For example, it solves a randomly-generated graph coloring problem with 2500 nodes and an edge density of 1 per cent in less than 1 second, while SMT solvers take more than 10 minutes without it.
Also, it makes SMT solving competitive with state-of-the-art ASP solvers
for the problems in this benchmark.

However, the grounder does not improve the performance of 2025 benchmarks in the SMT-LIB suite.  I have thus submitted new benchmarks for the class of problems where the grounder brings benefits.


In this paper, after describing general concepts and notations of first-order logic and relational algebra, I explain table-based quantifier elimination using relational algebra.  After describing the implementation, I report on its evaluation, and discuss related work.



\section{Preliminaries}

\subsection{SMT-LIB}

I present the logic using the terminology and ontology of the SMT-LIB standard~\cite{BarFT-RR-25}.
The logic is many-sorted.
A \define{signature} $\Sigma$ consists of:
(1) a set of \define{sort symbols} (including $\mathbf{Bool}$), denoted by $\sigma$;
(2) a set of ranked \define{function symbols} $f{:}\sigma_1 \dots \sigma_n \to \sigma$, or $f$ for short;
(3) a set of \define{sorted variables} $x{:}\sigma$, or $x$ for short.
A function symbol with result sort $\mathbf{Bool}$ is a \define{predicate}.

Function symbols are either \define{interpreted}, \define{defined}, or \define{uninterpreted}.
Interpreted symbols have a fixed behavior and belong to \define{background theories} (e.g., arithmetic).
A \define{definition} (using \texttt{define-fun}) associates a function symbol with a body term; such a symbol is defined.
Symbols that are neither interpreted nor defined are uninterpreted.

A \define{well-sorted term} is constructed from variables and function symbols applied to terms of the expected sorts.
A term is \define{ground} if it contains no variables.
A term $f(\bar t)$ is \define{nested} if it appears as an argument of another function or predicate.
A term $f(\bar t)$ is \define{flat} if its arguments are either variables or ids: it has no nested terms.
A term of sort $\mathbf{Bool}$ is a \define{formula}.
A closed formula is a \define{sentence}.
A \define{quantification} $\forall \bar{x} : \phi$ or $\exists \bar{x} : \phi$ consists of a quantifier prefix binding a set of variables $\bar{x}$, and a sub-formula $\phi$, called the \define{matrix} of the quantification.
A \define{literal} is either a predicate application $p(\bar t)$ or its negation $\neg p(\bar t)$.
A \define{clause} is a disjunction of literals.

Some interpreted symbols are \define{constructors}.
A \define{ground constructor term}, or \define{id} for short, is either a parameterless constructor (such as $\top$, $\bot$, or integers) or a constructor applied to ids.
Each constructor $c$ with arity $n$ is associated with $n$ \define{selectors} $s_1, \dots, s_n$ such that for any id $e = c(e_1, \dots, e_n)$, the value of $s_i(e)$ is $e_i$.

In the semantics of the logic, each sort $\sigma$ is associated with a \define{universe} $U_\sigma$.
A sort $\sigma$ is \define{representable} if its universe $U_\sigma$ is exactly its set of ids; for such a sort, the value of any ground term is an id.
For instance, $\mathbf{Bool}$ and $\mathbf{Int}$ are representable sorts, whereas $\mathbf{Real}$ is not.
The $\mathbf{Bool}$ ids are represented in SMT-LIB by the constants \texttt{true} and \texttt{false} (denoted $\top$ and $\bot$).

A \define{substitution} $\bar v$ maps variables to ids.
Applying a substitution $\bar v$ to a term $t$ yields an \define{instantiation} $t[\bar v]$; it is ground if $\bar v$ is for all the free variables of $t$.

A \define{structure} $\mathbf{A}$ provides an interpretation for every symbol in the signature, where the interpretation of defined and interpreted symbols is fixed by their definitions and theory specifications, respectively.
Every ground term $t$ has a \define{value} in $\mathbf{A}$, denoted by $t^\mathbf{A} \in U_\sigma$.
A ground formula $\phi$ is \define{true in $\mathbf{A}$} if $\phi^\mathbf{A} = \top$.
In practice, the value of some ground terms can be determined by \define{simplification}: replacing defined symbols with the body of their definitions and applying theory rules.

\subsection{Relational algebra}

A \define{relation} is a pair consisting of (1) a set of tuples of elements in a domain, and (2) a tuple of \define{attribute names}.
An \define{attribute} in a relation is a column; a tuple of elements is a row.
The number of rows may be infinite.
Here, the domain of the relations is the set of terms introduced in Section~2.1.

Further, I distinguish \define{comparison expressions} and \define{construction expressions} over a relation.
The base symbols of both types of expressions are the attribute names of the relation, which act as variables.
Comparison expressions evaluate to a boolean value ($\top$ or $\bot$) for each row of the relation. Their operators, $\sequals, \slt, \sleq, \sgeq, \sgt, \snot, \sand, \sor$, map to the native operators of the underlying database.
Construction expressions build new terms from attribute values, one for each row of the relation; these terms are simplified when possible using the laws of the background SMT theories. Their operators are the function symbols of the signature (e.g., $+, =, <, \dots, \lnot, \land, \lor$).

I use the following notations for relations:
\begin{itemize}
    \item $0_{\bar n}$ is the \define{empty relation} with attributes $\bar n$ and no tuples; $1$ is the \define{unit relation} (the 0-relation with one empty tuple);

    \item $\sigma_{\theta} R$ is the \define{selection} of tuples of $R$ for which condition $\theta$ evaluates to $\top$ ($\theta$ is a comparison expression);

    \item $R \times S$ is the \define{cross-product} of relations $R$ and $S$; its attributes are the concatenation of those of $R$ and $S$ (denoted $\bar n, \bar m$), and its rows are the concatenations of every row of $R$ with every row of $S$;

    \item $R \bowtie S$ is the \define{natural join} of $R$ and $S$; a tuple over the union of attribute names of $R$ and $S$ is in $R \bowtie S$ if its restriction to the attributes of $R$ is in $R$ and its restriction to the attributes of $S$ is in $S$;

    \item $R \leftouterjoin_{\theta} S$ is the \define{theta left outer join} of $R$ and $S$; it contains the theta join $\sigma_{\theta} (R \bowtie S)$ plus the tuples in $R$ that have no matching tuple in $S$ according to $\theta$, extended with a special $\mm{null}$ value for each attribute of $S$;

    \item $R \cup S$ is the \define{union} of two relations with the same attribute names;

    \item $\proj{\bar n, m{:}e} R$ is the \define{extended projection} of $R$, where $\bar n$ is a list of attribute names of $R$, $m$ is a new attribute name, and $e$ is a construction expression; the result has attributes $\bar n, m$ and its rows are obtained by concatenating the values of $\bar n$ in each row of $R$ with the term obtained by evaluating $e$ on that row (which is simplified when possible);

    \item $\proj{\bar n, m{:} \mm{agg}(\mm{op}, e)} R$ is the \define{aggregation} of $R$ over attributes $\bar n$, where $m$ is a new attribute name and $\mm{op}$ is $\land$ or $\lor$; for each group of tuples in $R$ sharing the same values for $\bar n$, it produces a tuple where $m$ is the result of applying $\mm{op}$ to all terms obtained by evaluating $e$ on each tuple in the group (and simplifying the result).
\end{itemize}

Two sets $X, Y$ of attribute names in a relation have a \define{functional dependency} when the values of $X$ attributes in a tuple determine the values of $Y$ attributes.

\section{Table-based quantifier elimination} 

In this section, I present the table-based quantifier elimination method. It relies on \textit{look-up tables} to represent the interpretation of defined function symbols (Section 3.1). The method proceeds in two phases:
first, a set of \textit{substitution relations} is discovered \textit{upwards} from the leaves of the formula, starting with an account for clauses (Section 3.2) and generalizing to arbitrary boolean terms (Section 3.3);
in the second phase, these relations are pushed \textit{downwards} from the root to perform the final quantifier elimination (Section 3.4).

\subsection{Look-up table with default}

\begin{definition}
    A \define{look-up table with default} for a defined function symbol $f{:}\sigma_1 \dots \sigma_n \to \sigma$
    is a pair $(R, d)$ where $R$ is a finite relation with attribute names $a_1 \dots a_n, a$ whose values are ids of sorts $\sigma_1 \dots \sigma_n, \sigma$ respectively, and $d$ is an id of sort $\sigma$, such that:
    \begin{itemize}
        \item $R$ has a functional dependency from the first $n$ attributes to the last;
        \item for every tuple $(e_1, \dots, e_n, e) \in R$, $e \neq d$;
        \item the function $f$ is interpreted as $f(\bar e) = e$ if the tuple $(\bar e, e)$ is present in $R$, and $f(\bar e) = d$ otherwise.
    \end{itemize}
\end{definition}

I call the pair $(R, d)$ a \define{table} for short.
I denote a table for a function $f$ by $f^D = (R_f, d_f)$.
A defined function may have several tables, each for different defaults, when its domain is finite.
For a predicate, the default $d$ can be $\top$ or $\bot$.

This representation is compact because it only lists the exceptions to the default value. It is particularly useful for predicates where one of $\top$ or $\bot$ is much more frequent than the other.

In simple cases, a table can be derived from an SMT-LIB script by syntactic analysis, as in our introductory example.
\begin{example}[Graph Coloring Interpretation]
    \label{ex:interpretation}
    In our graph coloring example from the introduction, syntactic analysis of the definition of \texttt{edge} allows us to determine a table $(R, \bot)$. $R$ has
    attributes $(a_1, a_2, a)$
    and consists of the rows \texttt{(1, 2, $\top$)}, \texttt{(2, 3, $\top$)}.\lipicsEnd
\end{example}

In more complex cases, a table can be computed from a definition whose body only uses ids and function symbols having a table.

\begin{example}[Derived Tables]
    Consider another graph problem where \texttt{(edge x y)} is defined as true only when the \texttt{weight} of \texttt{x} is larger than the \texttt{weight} of \texttt{y}.  If the \texttt{weight} property has a table, a table for \texttt{edge} can be determined; otherwise, it cannot.\lipicsEnd
\end{example}

Note that a function may have a known mathematical interpretation without being representable as a table.
For example, the square root function in the theory of Reals ($sqrt: Real \to Real$) cannot be represented this way because it would require an infinite relation $R$, and some of its values are not representable as ids.

\subsection{Substitution relations for clauses}

To show the gist of the method, I begin by describing how an exhaustive set of substitutions can be determined for a universally quantified clause.
The next section will generalize the method to more complex formulas.

A substitution relation contains a set of substitutions:

\begin{definition}
    \label{def:substitution_relation}
    A \define{substitution relation} is a relation whose attribute names are sorted variable names and whose tuples are ids of the appropriate sorts.
    The \define{universal set of substitu-}
    \define{tions for a set of variables} $\bar x$, denoted by $X_{\bar x}$, is the Cartesian product of the universes of the sorts of these variables.
    The substitution relation \define{for a literal} $L$, denoted $S_L$, contains the set of substitutions of its free variables that, when applied to $L$, yield a ground literal that does not simplify to $\top$ using the look-up tables and background theory laws. 
\end{definition}
A substitution relation can have an infinite number of tuples.
The following definition and proposition justify the use of substitution relations for quantifier elimination:

\begin{definition}
    The \define{substitution relation for a clause} $C$, denoted $S_C$, is the relation over its free variables obtained by the natural join of the substitution relations of its literals, $S_C = ~\bowtie_i S_{L_i}$.
\end{definition}

\begin{proposition}
    \label{prop:clause_correctness}
    A universally quantified clause is equivalent to the conjunction of the clauses obtained by applying each substitution in its substitution relation to the clause.
\end{proposition}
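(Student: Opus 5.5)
We prove that $\forall \bar x : C$, with $C = L_1 \lor \dots \lor L_k$ and $\bar x$ the free variables of $C$, is equivalent to $\bigwedge_{\bar v \in S_C} C[\bar v]$. We compare both sides in an arbitrary structure $\mathbf{A}$, and take the variables to range over representable sorts, so that every element of $U_\sigma$ is an id and substitutions cover the whole universe. The quantified clause is then equivalent to the conjunction of $C[\bar v]$ over all $\bar v \in X_{\bar x}$, possibly an infinite conjunction. The claim therefore reduces to showing that every $C[\bar v]$ with $\bar v \in X_{\bar x} \setminus S_C$ is valid, i.e.\ true in every structure $\mathbf{A}$ that respects the definitions and the theories. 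Once that holds, deleting those conjuncts changes nothing.

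Fix $\bar v \notin S_C$. We have $S_C = \bowtie_i S_{L_i}$, and each $S_{L_i}$ is a relation over the free variables $\bar x_i \subseteq \bar x$ of $L_i$. By the definition of natural join, $\bar v$ fails to lie in the join exactly when some index $i$ has $\bar v|_{\bar x_i} \notin S_{L_i}$. The case where some literal has no free variables is degenerate: then $S_{L_i}$ is either $1$ or $0$, and the same reasoning applies. By Definition~\ref{def:substitution_relation}, $\bar v|_{\bar x_i} \notin S_{L_i}$ means the ground literal $L_i[\bar v|_{\bar x_i}] = L_i[\bar v]$ simplifies to $\top$ using the look-up tables and the background theory laws.

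The key step is to show that simplification is sound. Each look-up table $(R_f, d_f)$ describes the interpretation of $f$ fixed by its definition, by the third clause of the table definition. The theory laws hold in every structure, because interpreted symbols are fixed by the theories. Hence each simplification step preserves the value of a term in every $\mathbf{A}$, and so $L_i[\bar v]^{\mathbf{A}} = \top$. The disjunction $C[\bar v]$ then contains a true disjunct, so $C[\bar v]$ is true in $\mathbf{A}$. Combining this with the first paragraph gives both directions of the equivalence: the left side implies every instance, and the right side together with these valid instances gives all instances, hence the universal statement.

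I expect the main obstacle to be making the first step rigorous, namely that substitutions by ids exhaust the universe. This requires representable sorts, or else a restriction of the statement to them. It also requires care that the literals of a clause may mention only some of the variables, which is handled by the restriction $\bar v|_{\bar x_i}$ in the join.
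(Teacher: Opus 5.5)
Your proof is correct and follows essentially the same route as the paper: the forward direction is immediate, and for the converse any substitution outside $S_C$ has, by the definition of natural join, some literal whose restricted substitution lies outside its $S_L$, so that literal simplifies to $\top$ and the clause instance is true. You are more careful than the paper on two points it leaves implicit: ids must exhaust the universe (representable sorts) for substitutions to cover the whole range of $\forall \bar x$, and simplification via look-up tables and theory laws must be sound in every structure.
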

\begin{proof}
    The implication $(\forall \bar x : C(\bar x)) \implies \bigwedge_{\bar v \in S_C} C(\bar v)$ is trivial.
    To prove the converse, suppose $\bigwedge_{\bar v \in S_C} C(\bar v)$ is true in a structure $\mathbf{A}$. I show that for any substitution $\bar u$ for the variables $\bar x$, $C(\bar u)$ is true in $\mathbf{A}$.
    If $\bar u \in S_C$, then $C(\bar u)$ is true by hypothesis.
    If $\bar u \notin S_C$, then by Definition~\ref{def:substitution_relation} and the definition of natural join, there must be at least one literal $L$ of $C$ such that the restriction of $\bar u$ to the free variables of $L$ (denoted $\bar u|_L$) is not in $S_L$.
    By Definition~\ref{def:substitution_relation}, this implies that the ground literal $L(\bar u|_L)$ simplifies to $\top$ using the look-up tables.
    Since $C$ is a disjunction of literals and one literal simplifies to $\top$, $C(\bar u)$ is true.
\end{proof}

A literal $L$ is a \define{guard} when its substitution relation $S_L$ is finite: a clause containing it is \define{guarded}. In this case, the substitution relation $S_C$ of the clause is restricted to a subset of $S_L \times X_{\bar y}$, where $X_{\bar y}$ is the universal set of substitutions for the other variables $\bar y$ of the clause. Indeed, by the definition of the natural join, for any tuple in $S_C$, its restriction to the variables of $L$ must belong to $S_L$.

\begin{definition}
    A variable of a quantified clause is \define{restricted} if it has a finite range in the substitution relation of the clause.
\end{definition}

A variable is restricted if its sort has a finite interpretation, or if it occurs in a guard literal.

Ultimately, the substitution relation $S_C$ is finite only when every variable in the clause is restricted. When it is infinite, the quantification of the restricted variables, if any, can still be expanded, while keeping the other variables universally quantified. However, the full expansion of the clause to a quantifier-free formula cannot be done, and other quantifier elimination techniques must be used for the non-restricted variables.

\subsubsection{Interpreted symbols.}
\label{sec:interpreted-symbols}

I begin by discussing the substitution relation of predicates of background SMT theories. Symbols from background SMT theories, such as equality, arithmetic operators, and comparisons, have fixed interpretations that the grounder exploits for simplification and restriction propagation.

For a literal $L$, $X_L$ denotes the universal set of substitutions for its free variables.

A literal is a \textbf{guard} if it restricts variables without requiring prior restrictions (\textbf{base}), or propagates variable restrictions \textbf{forward} (from restricted arguments to the output) or \textbf{backward} (from the restricted output to the arguments). Specifically, it is a guard in the following cases:
\begin{itemize}
    \item (\textbf{base}) if its universal set of substitutions $X_L$ is finite: its substitution relation is inherently finite;
    \item (\textbf{base}) if $L$ is a flat inequality of the form $x \neq e$ where $e$ is an id: its substitution relation is $\{e\}$;
    \item (\textbf{base}) if $L$ is of the form $f(\bar t) \neq a$, where $f$ is a constructor, $f(\bar t)$ is a flat term, and $a$ is an id: its substitution relation is the extended projection $\proj{\bar x{:} s_i(a)} (\sigma_\theta \{a\})$, where $\theta$ filters $a$ for the constructor $f$ and any ids or repeated variables in $\bar t$, while the attributes $\bar x$ are extracted via selectors for the distinct variables in $\bar t$;
    \item (\textbf{forward}) if $L$ is a literal $f(\bar t) \neq y$, $f(\bar t)$ is a flat term and all its variables are restricted: its substitution relation is the extended projection $\proj{\bar x, y{:} f(\bar t)} X_{\bar t}$.
    \item (\textbf{backward}) if $L$ is a literal $f(\bar t) \neq y$ where $f$ is a constructor, $f(\bar t)$ is a flat term, and $y$ is restricted by a substitution relation $S_y$: its substitution relation is the extended projection $\proj{\bar x{:} s_i(y), y} (\sigma_\theta S_y)$, where $\theta$ filters $y$ for the constructor $f$ and any ids or repeated variables in $\bar t$, while the attributes $\bar x$ are extracted via selectors for the distinct variables in $\bar t$;
\end{itemize}

\begin{example}[Restriction Propagation]
    Consider the flat literal \texttt{(not (= (cons x y) z))}.
    If the variable $z$ is restricted by the context, the literal is a backward guard that propagates the restriction on $z$ to the arguments $x$ and $y$ by applying selectors to $z$.
    Conversely, if the variables $x$ and $y$ are restricted by the context, the literal is a forward guard that propagates the restrictions on $x$ and $y$ to $z$ by evaluating the constructor.\lipicsEnd
\end{example}

A literal is a \textbf{filter} (its substitution relation $S_L$ is generally infinite, but restricts the join relation) in the following case:
\begin{itemize}
    \item if $L$ is a flat literal: its substitution relation is the theta selection $\sigma_\theta X_L$ where $\theta$ is the condition that $L$ does not simplify to $\top$ using theory laws (e.g., if $L$ is $x < 5$, $S_L = \sigma_{x \sgeq 5} X_L$); this rule covers ground literals (yielding relation $0$ or $1$) and general inequalities.
\end{itemize}
If the interpreted function $f$ does not fit these cases or any of its arguments is a nested term, the substitution relation is the universal set $X_L$.

In practice, the natural join of the substitution relations of all literals is simplified in two steps: (1) extracting and combining all selections into a single selection $\sigma_\Theta$ using the law $R \bowtie (\sigma_\theta S) = \sigma_\theta (R \bowtie S)$; and (2) removing the restricted variables $\bar x$ from the universal sets $X_{\bar x, \bar y}$ of non-guard literals using the law $S_{\bar x} \bowtie X_{\bar x, \bar y}= S_{\bar x} \times X_{\bar y}$, which leaves any unrestricted variables in a final cross product. The resulting formula has three parts:
\[ S_C = \sigma_\Theta \left(\bowtie_{i} S_i\right) \times X_u \]
where (i) $\bowtie_i S_i$ is a join of the finite relations provided by guard literals and extended projections for interpreted functions, (ii) $\sigma_\Theta$ is a theta selection for interpreted predicates and table filters, and (iii) $X_u$ is the universal set of substitutions for any unrestricted variables $u$. The first two parts are used to compute the substitution relation for the restricted variables of the clause.

\subsubsection{Defined and uninterpreted symbols.}
\label{sec:defined-symbols}

I now expand the previous discussion to include defined and uninterpreted symbols.

The table $(R, d)$ of the predicate of a literal $L$ \define{is appropriate} for $L$ if $d = \bot$ and $L$ is negative, or if $d = \top$ and $L$ is positive.

A literal is a \textbf{guard} in the following cases:
\begin{itemize}
    \item (\textbf{base}) if its universal set of substitutions $X_L$ is finite: its substitution relation is inherently finite;
    \item (\textbf{base}) if the predicate of $L$ has an appropriate table $(R, d)$ and $L$ is a flat literal: $S_L$ is derived from $R$ by (i) filtering the tuples to match any ids in the arguments, (ii) filtering the tuples to satisfy any variable repetitions, and (iii) projecting and renaming the attributes to match the variables of the literal;
    \item (\textbf{base}) if $L$ is of the form $f(\bar t) \neq a$ where $f$ has a table $(R, d)$, $f(\bar t)$ is a flat term, $a$ is an id, and $d \neq a$: the substitution relation $S_L$ is finite and contains the substitutions for which $f(\bar t) = a$ (which are exactly in $R$);
    \item (\textbf{forward}) if $L$ is a literal $f(\bar t) \neq y$ where $f$ has a table $(R, d)$, $f(\bar t)$ is a flat term, and all its variables are restricted: the substitution relation is derived by performing a left outer join between the finite set $X$ of substitutions of its variable arguments and the relation $R$. Similar to the table of a predicate, the result is filtered and renamed based on the arguments $\bar t$. Any null value for $y$ produced by the outer join is replaced by the default value $d$;
\end{itemize}

A literal is a \textbf{filter} (its substitution relation $S_L$ is generally infinite, but restricts the join relation) in the following cases:
\begin{itemize}
    \item if the predicate of $L$ has a table $(R, d)$ that is not appropriate for $L$ and $L$ is a flat literal: its substitution relation is the selection $\sigma_\theta X_L$, where $\theta$ is the condition that the arguments of the literal do not match any row in $R$;
    \item if $L$ is of the form $f(\bar t) \neq a$ where $f$ has a table $(R, d)$, $f(\bar t)$ is a flat term whose variables are all restricted, $a$ is an id, and $d = a$: the substitution relation is the selection $\sigma_\theta X_L$ where $\theta$ is the condition that the arguments do not match any row in $R$.
\end{itemize}
If the function $f$ does not fit these cases or any of its arguments is a nested term, the substitution relation is the universal set $X_L$.

\begin{example}[Variable Repetition and Constants]
    Consider the literal \texttt{(p 1 x x)} where \texttt{p} has table $(R, \bot)$ with $R = \{(1, 2, 2, \top), (1, 3, 4, \top), (2, 2, 2, \top)\}$.
    Its substitution relation for $x$ is $\{2\}$.
    The row $(1, 3, 4, \top)$ is excluded because the repeated variable $x$ requires the second and third attributes to be equal ($3 \neq 4$), while $(2, 2, 2, \top)$ is excluded because the first attribute does not match the id $1$.\lipicsEnd
\end{example}

\begin{example}[Table-Based Forward Propagation]
    Consider the literal \texttt{(not (= (f x) y))}, where $x$ is restricted to $\{1, 2, 3\}$. If $f$ has a table $(R, 0)$ where $R$ contains the tuples $(1, 10)$ and $(2, 20)$, a left outer join between $X = \{1, 2, 3\}$ and $R$ yields the rows $(1, 10), (2, 20), (3, \text{null})$.  The null value is then replaced by the default $0$. The substitution relation for $(x, y)$ is then $\{(1, 10), (2, 20), (3, 0)\}$.\lipicsEnd
\end{example}

\begin{example}[Clause Substitution Relation]
    Consider the clause \texttt{(or (not (edge x y)) (edge y x))}.
    The substitution relation for literal \texttt{(not (edge x y))} contains the substitutions $(x,y)$ where the literal is not $\top$, i.e., where \texttt{(edge x y)} is not $\bot$. Since \texttt{edge} has table $(R, \bot)$, this is a projection of $R$: \texttt{\{(1,2), (2,3)\}}.
    For literal \texttt{(edge y x)}, the substitution relation contains all substitutions $(y,x)$ for which \texttt{edge y x} is not $\top$. This is the selection $\sigma_{\theta} X_L$, where $\theta$ is the condition that the arguments $(y,x)$ do not match any row in $R$.
    The substitution relation for the clause is the natural join of these two relations: $\{(1,2), (2,3)\} \bowtie \sigma_{\theta} X_L$, i.e., $\{(1,2), (2,3)\}$.\lipicsEnd
\end{example}

\subsubsection{Selective unnesting.}
\label{sec:selective-unnesting}
Recall that a function application $f(\bar t)$ is a nested term when it appears as an argument of a literal or of another function. When a clause contains nested terms, the nesting can be removed by \define{unnesting}: a literal $L[f(\bar t)]$ where $f(\bar t)$ is a nested term is replaced by two literals: $f(\bar t) \neq y \lor L[y]$, where $y$ is a fresh, universally quantified variable. The literal $f(\bar t) \neq y$ is called an \define{unnesting literal}.

The guiding principle of selective unnesting is to avoid unmanageable combinatorial explosions. To this end, a nested term $f(\bar t)$ is unnested only when the resulting unnesting literal $f(\bar t) \neq y$ is a guard in the context of the clause, as described in Sections~\ref{sec:interpreted-symbols} and~\ref{sec:defined-symbols}.

Since these guard conditions can depend on other variables being restricted, unnesting is applied repeatedly until completion, starting with the base guards of the clause. This repetition propagates restrictions both backward and forward.

\begin{example}[Selective Unnesting]
    Consider the literal \texttt{(p (f (g x)))}, where the predicate \texttt{p} has an appropriate table. The nested term \texttt{(f (g x))} is unnested to yield \texttt{(or (not (= (f (g x)) y)) (p y))}, where the flat literal \texttt{(p y)} is a guard that restricts the fresh variable $y$. This unnesting allows the restriction to propagate backward.\lipicsEnd
\end{example}

\subsection{Generalisation to boolean terms}
\label{sec:substitution}

The previous section described how to determine a substitution relation for a clause, which is a specialized formula structure. We now generalize this to arbitrary boolean terms, building on the concepts already defined to handle nested quantifications and complex logical expressions.

The selective unnesting described in Section~\ref{sec:selective-unnesting} is adapted for general formulas following two principles:
\begin{itemize}
    \item the unnesting literal $f(\bar t)=y$ is added to the outermost quantification that binds all variables in $\bar t$, and all occurrences of the function application are replaced by the fresh variable $y$;
    \item if that quantification is universal, say $\forall \bar x: \phi$, the literal is added as a disjunct: $\forall \bar x, y: (f(\bar t) \neq y) \lor \phi$; if it is existential, say $\exists \bar x: \phi$, it is added as a conjunct: $\exists \bar x, y: (f(\bar t) = y) \land \phi$.
\end{itemize}
This approach ensures that the unnesting literal is placed at the highest possible level in the formula structure, allowing restrictions to propagate between its arguments $\bar t$ and the result variable $y$ across as much of the formula as possible.

To take advantage of the duality between existential and universal quantifiers, and to handle negation correctly, we introduce two substitution relations for any boolean term $\phi$:

\begin{definition}
    \label{def:substitution_relation_general}
    The \define{$\exists$ (resp. $\forall$) substitution relation} for a boolean term $\phi$, denoted $S_\phi^{\exists}$ (resp. $S_\phi^{\forall}$), is a relation containing the substitutions of its free variables for which the instantiations of $\phi$ do not simplify to $\bot$ (resp. $\top$) using the look-up tables and background theory laws.
\end{definition}

As the \textit{base case}, for predicates and interpreted theory atoms, the substitution relations are obtained directly from the definitions in Section 3.2: $S_\phi^{\forall}$ is the substitution relation $S_\phi$ for the atom $\phi$, and $S_\phi^{\exists}$ is the substitution relation $S_{\neg \phi}$ for its negation.

As the \textit{inductive case}, the substitution relations for complex terms are computed recursively from their components. The rules for the logical connectives and quantifiers are summarized in Table~\ref{tab:recursive-rules}.

\begin{table}[t!]
    \centering
    \begin{tabular}{Sl Sc Sc Sc }
    \toprule
         & $\phi$ & $S_{\phi}^{\exists}$ & $S_{\phi}^{\forall}$ \\
    \midrule
    1.1 & $\lnot \psi$ & $S_{\psi}^{\forall}$ & $S_{\psi}^{\exists}$ \\
    1.2 & $\psi_0 \land \psi_1$ & $S_{\psi_0}^{\exists} \bowtie S_{\psi_1}^{\exists}$ & $(S_{\psi_0}^{\forall} \times X_{\bar x \setminus \bar x_0}) \cup (S_{\psi_1}^{\forall} \times X_{\bar x \setminus \bar x_1})$ \\
    1.3 & $\psi_0 \lor \psi_1$ & $(S_{\psi_0}^{\exists} \times X_{\bar x \setminus \bar x_0}) \cup (S_{\psi_1}^{\exists} \times X_{\bar x \setminus \bar x_1})$ & $S_{\psi_0}^{\forall} \bowtie S_{\psi_1}^{\forall}$ \\
    1.4 & $\forall \bar y: \psi$ & $\proj{\bar x} S_{\psi}^{\exists}$ & $\proj{\bar x} S_{\psi}^{\forall}$ \\
    1.5 & $\exists \bar y: \psi$ & $\proj{\bar x} S_{\psi}^{\exists}$ & $\proj{\bar x} S_{\psi}^{\forall}$ \\
    \bottomrule
    \end{tabular}
    \caption{Recursive rules for substitution relations. $\bar x$ denotes the free variables of $\phi$, and $\bar x_i$ those of $\psi_i$.}
    \label{tab:recursive-rules}
\end{table}

The fundamental ``key elements'' of this generalized framework are:
\begin{itemize}
    \item \textbf{Negation Duality (Rule 1.1)}: Negation simply swaps the roles of the two relations. This ensures that if a positive formula has a finite $S^\exists$ (e.g., from an appropriate table), its negation will have a finite $S^\forall$, making it an effective guard in a clause.
    \item \textbf{Connective Duality (Rules 1.2, 1.3)}: Conjunction and disjunction are duals. For $S^\exists$, conjunction is a join (restricting possibilities) and disjunction is a union. For $S^\forall$, the roles are reversed. Notably, Rule 1.3 for $S^\forall$ justifies the use of the natural join for clauses in Section 3.2. If the sub-formulas have different free variables, the union is performed after extending the relations with the universal set $X$ for the missing variables.
    \item \textbf{Quantifier Projection (Rules 1.4, 1.5)}: Both universal and existential quantifications reduce to a simple projection of the matrix's relations onto the free variables $\bar x$.
\end{itemize}

These recursive rules allow the grounder to propagate finite restrictions \textbf{upwards} through arbitrary logical structures, even when the underlying universes are infinite.

\begin{example}[Recursive Boolean Propagation]
    To illustrate how finite restrictions propagate through logical connectives, consider the formula $\phi = p(x) \land \neg q(x, y)$. Suppose $p$ has a table $(R_p, \bot)$ and $q$ has a table $(R_q, \top)$.
    By Rule 1.2, $S_\phi^\exists = S_{p(x)}^\exists \bowtie S_{\neg q(x, y)}^\exists$.
    Since the default of $p$ is $\bot$, $S_{p(x)}^\exists = R_p$ is finite.
    By Rule 1.1, $S_{\neg q(x, y)}^\exists = S_{q(x, y)}^\forall$.
    Since the default of $q$ is $\top$, $S_{q(x, y)}^\forall = R_q$ is also finite.
    The resulting relation $S_\phi^\exists = R_p \bowtie R_q$ is finite, effectively restricting both $x$ and $y$.\lipicsEnd
\end{example}

Once these relations are found at the top level, they are used to propagate substitutions back \textbf{downwards} for the final exhaustive expansion, as discussed in the next section.

\subsection{Downward phase: Quantifier elimination}
\label{sec:elimination}

Once the substitution relations are determined via the upward phase, they are used to generate the grounding relation recursively downwards from the root of the formula. Indeed, the actual expansion of a quantification depends on its \define{context}, i.e., the set of substitutions for its free variables already determined by outer formulas.
For a quantification $\forall \bar y : \psi$ (resp. $\exists \bar y : \psi$) in a context $S$, its \define{expansion relation} is the relation $E = S \bowtie S_\psi^\forall$ (resp. $E = S \bowtie S_\psi^\exists$).
It identifies the substitutions for which the quantification must be explicitly expanded.
For any other substitution, either it is not part of the current context $S$, or it is not in the substitution relation ($S_\psi^\forall$ or $S_\psi^\exists$) and thus its matrix simplifies to the quantifier's identity element ($\top$ for $\forall$ and $\bot$ for $\exists$); in both cases, it can be omitted.

The expansion relation provides the mechanism for \define{dynamic guarding}: even if the substitution relation for a sub-formula is infinite (e.g., for a theory comparison like $y > x$), the expansion relations can be finite if the context provides sufficient restrictions.

The following definition and proposition formalize this final step of the method.

\begin{definition}
    \label{def:grounding}
    The \define{grounding relation} $G(\phi, S)$ for a formula $\phi$ in a context $S$ with attributes $\bar x$ is a relation with attributes $\bar x, g_\phi$ that associates each substitution $\bar v \in S$ with a ground formula $g_\phi$ equivalent to the instantiation $\phi[\bar v]$.
\end{definition}

The grounding relation is computed by traversing the structure of $\phi$ recursively.
For literals, $g_\phi$ is the simplified instantiation of the literal for each row, using a left outer join with the table of the predicate or function (for unnesting term) if possible.
For boolean combinations, the grounding relations of the sub-formulas are joined and the logical operator is applied to their grounding attributes. For quantifications, the grounding attribute of the matrix is aggregated over its expansion relation in context $S$ using the appropriate logical operator.

The formal rules for computing the grounding relation are summarized as follows:
\begin{center}
\begin{tabular}{Sc Sc }
    \toprule
         \textbf{Formula $\phi$}
         &  \textbf{Grounding relation $G(\phi, S)$}
         \\ \hline
    \midrule
        literal
          & $\proj{\bar x, g_\phi{:}\phi^*} S$
        \\
        $\diamond \psi_i$
          & $\proj{\bar x, g_\phi{:}\diamond g_{\psi_i}} (\bowtie_i G(\psi_i, S))$
        \\
        $\forall \bar y{:}\psi$
          & $\proj{\bar x, g_\phi{:}\mm{agg}(\land, g_\psi)} G(\psi, S \bowtie S_\psi^\forall)$
        \\
        $\exists \bar y{:}\psi$
          & $\proj{\bar x, g_\phi{:}\mm{agg}(\lor, g_\psi)} G(\psi, S \bowtie S_\psi^\exists)$
        \\
    \bottomrule
\end{tabular}
\end{center}
where $\phi^*$ is the simplified instantiation of the literal, $g_{\psi_i}$ and $g_\psi$ are the grounding attributes for the sub-formulas, and $\diamond$ is a boolean operator.

In practice, this grounding can be computed by a database query to leverage modern query optimizers for best performance.

Notably, for any unnesting literal $f(\bar t) \neq y$ (or $f(\bar t) = y$ for existential quantification) where $f$ is interpreted or has a table, $f(\bar t)$ is a flat term, and all its variables are restricted (possibly by the unnesting itself, as for constructors), the literal simplifies to $\top$ or $\bot$ for every row in the expansion relation. Indeed, as described in Sections~\ref{sec:interpreted-symbols} and~\ref{sec:defined-symbols}, the substitution relation for such a literal only contains substitutions for which the equality $f(\bar t) = y$ holds; thus, during the downward phase, the grounder can evaluate the application and replace the literal by its boolean value, effectively eliminating the interpreted or defined function from the grounded formula.
Conversely, if a literal contains nested terms that were not unnested, its instantiation cannot be reduced to a boolean value.
In that case, the function definition must be communicated to the SMT solver (e.g., as a quantified assertion) to allow the solver to handle the remaining function symbol.

\begin{example}[Grounding and Simplification]
    To illustrate this process, consider a substitution $\{x \mapsto 3, z \mapsto 2, y \mapsto 10\}$ for a clause unnested to \texttt{(not (= (g x) z))} $\lor$ \texttt{(not (= (f z) y))} $\lor$ \texttt{(p y)}. In the downward phase, the grounder produces the ground clause \texttt{(or (not (= (g 3) 2)) (not (= (f 2) 10)) (p 10))}. If the tables for $g$ and $p$ indicate that $g(3)=2$ and $p(10)=\bot$, the literals \texttt{(not (= (g 3) 2))} and \texttt{(p 10)} both simplify to $\bot$. The resulting ground clause then simplifies to \texttt{(not (= (f 2) 10))}.\lipicsEnd
\end{example}

\begin{proposition}
    \label{prop:grounding_correctness}
    The recursive computation described above yields a grounding relation for $\phi$ in context $S$.
\end{proposition}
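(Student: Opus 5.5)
The proof is by structural induction on $\phi$. The induction hypothesis is: for every subformula $\psi$ and every context $S'$ over a superset of the free variables of $\psi$, $G(\psi,S')$ associates each $\bar v\in S'$ with exactly one ground formula $g_\psi$ that is equivalent to $\psi[\bar v]$. I will quantify over all contexts, because the quantifier rules change the context (from $S$ to $S\bowtie S_\psi^\forall$ or $S\bowtie S_\psi^\exists$). The structure of a context is therefore not fixed along the induction.

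\textbf{Literals and connectives.}
\begin{itemize}
\item For a literal, $\proj{\bar x, g_\phi{:}\phi^*} S$ produces one row per $\bar v\in S$. Here $\phi^*$ is the instantiation $\phi[\bar v]$, simplified with look-up tables and theory laws. Simplification preserves the value in every structure, since the tables faithfully describe the defined symbols (Definition of look-up table) and the theory laws are valid. This covers the left outer join with a table, where a null is replaced by the default $d$: by the third condition on tables, that is exactly the value of $f$.
\item For $\diamond\psi_i$, apply the hypothesis to each $G(\psi_i,S)$ with the same context $S$. Each of these relations has exactly one row per $\bar v\in S$, keyed by $\bar x$. So the natural join on $\bar x$ again has exactly one row per $\bar v$, carrying the $g_{\psi_i}$ equivalent to $\psi_i[\bar v]$. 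Since $\diamond$ commutes with instantiation, $\diamond g_{\psi_i}$ is equivalent to $\phi[\bar v]$.
\end{itemize}

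\textbf{Quantifiers.} Take $\phi=\forall\bar y{:}\psi$; the existential case is dual, with $\lor$, $\bot$ and $S_\psi^\exists$. Let $E=S\bowtie S_\psi^\forall$, with attributes $\bar x,\bar y$. By induction, $G(\psi,E)$ gives, for each $(\bar v,\bar w)\in E$, a formula equivalent to $\psi[\bar v,\bar w]$. Aggregating with $\land$ over $\bar x$ therefore yields $\bigwedge_{\bar w:(\bar v,\bar w)\in E}\psi[\bar v,\bar w]$. The remaining step is to show this is equivalent to $\forall\bar y{:}\psi[\bar v]$, which is the argument of Proposition~\ref{prop:clause_correctness} lifted to general formulas:
\begin{itemize}
\item One direction is trivial: the universal formula implies every conjunct.
\item For the converse, take any $\bar w\in X_{\bar y}$ such that $(\bar v,\bar w)\notin E$. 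Since $\bar v\in S$, the restriction of $(\bar v,\bar w)$ to the free variables of $\psi$ is not in $S_\psi^\forall$. By Definition~\ref{def:substitution_relation_general}, $\psi[\bar v,\bar w]$ simplifies to $\top$, so it is true in every structure.
\end{itemize}
Universes here are sets of ids (representable sorts), so substitutions cover all elements. If non-representable sorts appear, I would note that the statement must be restricted to them.

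\textbf{Main obstacle.} The delicate point is an empty group. If no $\bar w$ satisfies $(\bar v,\bar w)\in E$, then the projection-aggregation produces no row for $\bar v$, whereas the intended value is $\top$ (respectively $\bot$ for $\exists$). I would handle this by reading the aggregation as a left outer join of $S$ with the grouped result, with a null replaced by the identity element of the operator. I would state this convention explicitly, mirroring the default-value treatment of tables, so that every $\bar v\in S$ receives exactly one row.

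A second point to verify is that Definition~\ref{def:substitution_relation_general} is actually satisfied by the relations computed through Table~\ref{tab:recursive-rules}. The relations must over-approximate the non-trivial substitutions. If they were exact or under-approximate, the omitted substitutions might not simplify to $\top$ and the converse direction above would fail. I would assume this from the upward phase, or check each rule briefly: for example, a conjunction that does not simplify to $\top$ has some conjunct that does not simplify to $\top$, which is what the union in rule 1.2 captures.
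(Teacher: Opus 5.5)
Your proof is correct and follows the same structural induction as the paper's: literals are instantiated directly, the natural join aligns the sub-groundings of a connective, and a quantifier aggregates over the expansion relation $E$, whose omitted substitutions make the matrix simplify to the identity element of the operator. Your extra care is warranted, because the paper's proof silently treats an empty group $\{w \in E \mid w|_S = \bar v\}$ as yielding $\top$ (resp.\ $\bot$), whereas the aggregation operator as defined in the preliminaries emits no row for such $\bar v$; your outer-join-with-identity convention, together with your explicit reliance on the upward relations over-approximating and on substitutions covering the whole universe, is what makes the statement hold literally.
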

\begin{proof}
    We must show that for any substitution $\bar v \in S$, the computed attribute $g_\phi$ is equivalent to $\phi[\bar v]$. The proof follows by structural induction on the formula $\phi$:
    \begin{itemize}
        \item if $\phi$ is a literal, $g_\phi$ is defined as the instantiation $\phi^* = \phi[\bar v]$ for each row. The equivalence $\phi[\bar v] \iff g_\phi$ holds trivially;
        \item if $\phi$ is a boolean combination $\diamond \psi_i$, the natural join of the grounding relations $G(\psi_i, S)$ aligns the sub-groundings for the same substitution $\bar v$. By the induction hypothesis, each $g_{\psi_i}$ is equivalent to $\psi_i[\bar v]$; thus, $\diamond g_{\psi_i}$ is equivalent to $(\diamond \psi_i)[\bar v]$;
        \item if $\phi = \forall \bar y{:}\psi$ (resp. $\exists \bar y{:}\psi$), let $E = S \bowtie S_\psi^\forall$ (resp. $E = S \bowtie S_\psi^\exists$) be the expansion relation. By the induction hypothesis, for each $w \in E$, $\psi[w]$ is equivalent to $g_\psi$. For a given $\bar v \in S$, the aggregation $\mm{agg}(\mm{op}, g_\psi)$ over the substitutions $\{w \in E \mid w|_S = \bar v\}$ correctly yields the grounded quantification: the expansion relation ensures that any omitted substitutions for $\bar y$ are those for which $\psi[\bar v, \bar y]$ simplifies to the identity element of the operator $\mm{op}$ (namely $\top$ for $\forall$ and $\bot$ for $\exists$), thereby preserving the truth value of the quantification.
    \end{itemize}
\end{proof}

Hence, one can solve the satisfiability of a sentence by solving the satisfiability of its grounding in the unit context.

\begin{example}[Complete Quantifier Expansion]
    Consider the quantified formula from the introduction:
\begin{lstlisting}
    (forall ((x Int) (y Int))
                (=> (edge x y)
                    (not (= (colorOf x) (colorOf y)))))
\end{lstlisting}
    Its $\forall$ expansion relation in the unit context $1$ is $E = \{(1, 2), (2, 3)\}$. These are the only substitutions for which the matrix does not simplify to $\top$ using the table of \texttt{edge}. The quantification is thus grounded into:
\begin{lstlisting}
    (and (not (= (colorOf 1) (colorOf 2)))
         (not (= (colorOf 2) (colorOf 3))))
\end{lstlisting}
\lipicsEnd
\end{example}

Note that the unions in Rules 1.2 (for $S^\forall$) and 1.3 (for $S^\exists$) can cause redundant instantiations. For example, in $\forall x : (\psi_0(x) \land \psi_1(x))$, the context for the grounding of the matrix is $S^\forall = S_{\psi_0}^\forall \cup S_{\psi_1}^\forall$. Thus, the conjunct $\psi_0(x)$ is needlessly instantiated for $x$ that do not satisfy $\psi_1$ but satisfy $\psi_0$ (and dually for $\psi_1(x)$). This can be avoided by pushing universal quantifiers down conjunctions and existential quantifiers down disjunctions in a preprocessing step, allowing the resulting independent quantifiers to be grounded separately.

\section{Validation}

To validate the quantifier elimination method,
I wrote an implementation of it in a grounder called xmt-lib,
and compared its performance to other approaches.
The benchmarking code is available on-line.\footnote{https://github.com/xmt-lib/benchmarks}

\subsection{Implementation}


The grounder seeks to support the full SMT-LIB 2.6 specification.
However, at this time, it only supports the Core, Int and Real theories of SMT-LIB, and lacks support for the \texttt{match}, \texttt{xor}, \texttt{push} and \texttt{pop} elements of the language.




The result of grounding is a regular SMT-LIB script that can be submitted to any SMT-LIB compliant solver.
For convenience, the xmt-lib program can also execute the script using an embedded Z3 solver.
It can also show the SQL statements used for grounding for debugging purposes.
A demo version is available online.\footnote{https://pcarbonn.github.io/XMT-IDE}

The grounding is performed using a relational database management system (RDBMS) called SQLite.
SQLite is a popular RDBMS that can be easily embedded in other programs.
It has a sophisticated query planner and optimizer\footnote{https://sqlite.org/optoverview.html}.
It has a comprehensive test suite that helps ensure its reliability.
The grounder uses its support of custom functions, including custom aggregation functions.

The grounder is written in Rust for reliability and performance.
It uses two threads: one for the main program and Z3, the other for SQLite.
In the database, logical terms are stored using their textual representations.
Some of the challenges of the implementation are:
\begin{itemize}
    \item the simplification of grounded terms, by handling interpreted operators via custom database functions;
    \item the simplification of relational formula, and the selective addition of the \texttt{DISTINCT} keyword in select statements to avoid duplicate rows (because the semantics of SQL slightly differs from the one of relational theory);
    \item the handling of variable name shadowing.
\end{itemize}

\subsection{Graph coloring}

I first report on experiments on the graph coloring problem presented in the introduction.  I used graphs of various sizes, with a random edge relation that is true for 1 per cent of the pairs of nodes.
I used 4 different encodings of the problem: using either integers or a datatype to identify the nodes, and either a function definition or a quantified assertion to encode the definition of the edge relation.

I observed the time to determine satisfiability, with a time limit of 5 minutes, on a Ubuntu 24.04.4 machine with Intel® Core™ i7-8850H with 12 cores, allocating 10 GB of memory to the task.
The cvc5 solver could not determine the satisfiability (returning ``unknown''), but the Z3 solver could.  Results are shown in Figure~\ref{fig:graph-coloring}.

\begin{figure}[h]
    \centering
    \includegraphics[width=0.9\textwidth]{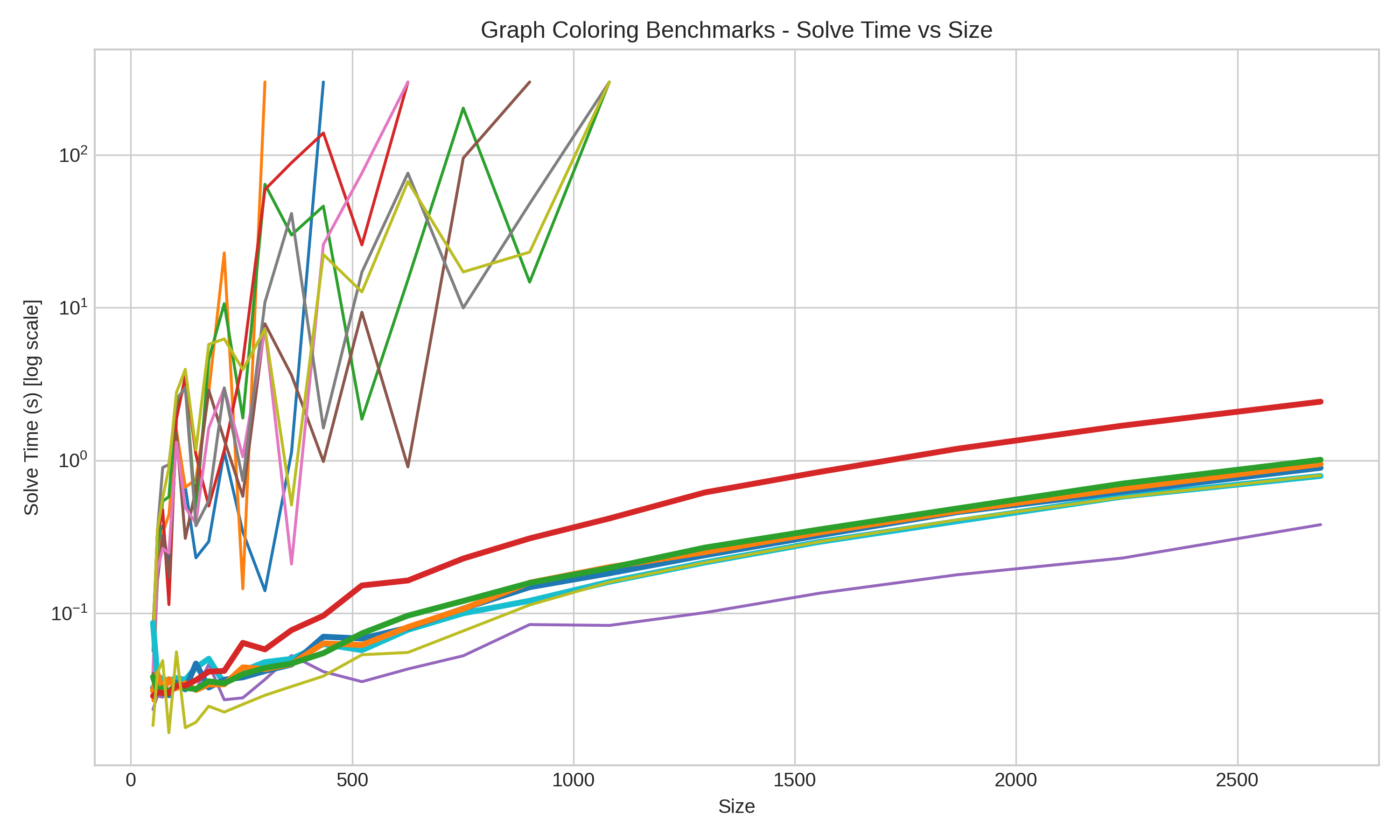}
    \caption{Solve time for different encodings of the graph coloring problem.}
    \label{fig:graph-coloring}
\end{figure}

The lowest line in the Figure is for the quantifier-free encoding solved with Z3: it is the fastest solution, but requires writing a custom program to expand the quantifier.
The lines above it are for the 4 encodings described above solved with `xmt-lib`: it is significantly faster and scales better than solving the same encodings without `xmt-lib`, as shown by the 8 lines above it.  Some of these lines show the performance when E-matching is disabled (which slightly improves performance).

Another preprocessor, called the Ultimate Eliminator\footnote{https://www.ultimate-pa.org/eliminator/}, could not eliminate quantifiers for this problem.

\subsection{SMT-LIB Benchmarks}

I searched for SMT-LIB benchmarks that would benefit from the xmt-lib grounder.  I scanned the non-incremental benchmarks in the 2025 SMT-LIB library to find scripts with defined functions that restrict the range of quantified variables, but found none.


So, I translated benchmarks for SAT and ASP grounders to SMT-LIB, and submitted them for inclusion in the 2026 SMT-LIB library, on the premise that SMT solvers should be able to solve these satisfiability problems reasonably fast.
The submitted benchmarks come from the DIRT benchmark~\cite{DBLP:conf/jelia/LaerVV25}, a collection of benchmarking problems from the literature on grounders.
Three of them are graph problems.

Table~\ref{tab:dirt-results} shows the performance of Z3 4.16.0 and cvc5 1.3.3 on these benchmarks.
The \textbf{xmt} columns show the performance of these solvers when the benchmarks are first preprocessed by \texttt{xmt-lib}.
This is further compared to the performance of clingo 5.8.0~\cite{DBLP:journals/corr/GebserKKS14}, IDP3 3.7.0~\cite{DBLP:journals/corr/CatBBD14} and SLI 0.5.2\footnote{https://gitlab.com/sli-lib/SLI}.

These results show that standalone Z3 and cvc5 have the worst performance on these benchmarks, that \texttt{xmt-lib} significantly improves their performance, and that \texttt{xmt-lib}+Z3 is competitive with SAT and ASP solvers.

\begin{table}[tb]
\centering
\begin{tabular}{l r | r r | r r | r r r}
\toprule
Benchmark & Size & Z3 & xmt & cvc5 & xmt & clingo & IDP3 & SLI \\
\midrule
CommonItems & 200,000 & >60 & 0.63 & >60 & \textbf{0.62} & 6.73 & 0.70 & 1.35 \\
CompleteSets & 200,000 & >60 & 2.71 & >60 & 2.21 & 2.93 & 2.56 & \textbf{1.41} \\
GraphColoring & 2,500 & >60 & 0.68 & >60 & 1.27 & 11.10 & 3.40 & \textbf{0.63} \\
N\_queens & 37 & 4.36 & 0.86 & err & >60 & 1.10 & \textbf{0.58} & 3.71 \\
NPRC & 1000 & >60 & 8.50 & >60 & >60 & \textbf{1.49} & 2.60 & 6.57 \\
PPM & 350 & >60 & >60 & err & >60 & >60 & >60 & >60 \\
PackingProblem & 31 & >60 & 2.04 & err & >60 & >60 & \textbf{0.39} & 16.87 \\
QuasiGroup & 50 & >60 & >60 & err & >60 & \textbf{12.77} & 19.01 & >60 \\
RamseyNumbers & 21 & 5.14 & 1.13 & err & 2.72 & \textbf{0.17} & 0.39 & 0.95 \\
TGCheckSat & 250 & 31.27 & \textbf{0.030} & 3.45 & 0.037 & 0.068 & 0.18 & 0.11 \\
\bottomrule
\end{tabular}
\caption{Results (solve times in seconds) on the DIRT benchmark. err indicates incorrect result or solver error, and the bold font highlights the fastest solver.}
\label{tab:dirt-results}

\end{table}

\section{Related work}

There is a rich body of work on quantifier elimination and complete instantiation for SMT.
Traditional quantifier elimination techniques typically target specific mathematical theories, such as real arithmetic~\cite{Sturm2017} or bit-vectors~\cite{Niemetz2018}, by using theory-specific algebraic or logical manipulations to systematically eliminate quantified variables.
Tools like the Ultimate Eliminator~\cite{ultimate_eliminator} provide practical implementations of these techniques for SMT.
In contrast, our approach generalizes ASP-style relational grounding to SMT, leveraging relational algebra and substitution relations to perform a complete bottom-up expansion, without being strictly tied to the decidability of a specific background theory.

Furthermore, complete instantiation techniques identify finite sets of ground terms that are sufficient to preserve satisfiability for certain fragments of first-order logic~\cite{bradley2006what, ge2009complete, janota2021fair}.
While these methods also address quantification over infinite domains, they operate within the context of lazy SMT solving and typically rely on the solver's internal heuristics and E-matching during the search process.
In contrast, our table-based method performs an eager, complete grounding step by identifying dynamic guards prior to search, thus cleanly separating the grounding phase from the solving phase.

Grounders are a cornerstone of Answer Set Programming (ASP) solvers~\cite{DBLP:conf/aiia/CalimeriFPZ16, DBLP:conf/lpnmr/GebserST07, DBLP:journals/aim/KaufmannLPS16}.
Van Laer et al.~\cite{DBLP:conf/lpnmr/LaerVV24} proposed using bit vectors to ground first-order logic sentences for use by an SMT solver.
However, none of these approaches support quantification over infinite universes.

Wittocx et al.~\cite{DBLP:journals/jair/WittocxMD10} introduced the use of certainly-true and certainly-false bounds for formulas when grounding first-order and FO(ID) sentences for SAT solvers.
These concepts share similarities with the substitution relations presented in Section~3.3; however, their work does not address infinite universes.

Mitchell~\cite{Mitchell2006}, Mohebali et al.~\cite{mohebali2007mxg}, and Aavani et al.~\cite{DBLP:conf/ai/AavaniWTM11} proposed a grounding method for first-order logic formulas with complex terms using relational algebra.
This work builds upon and extends their approach in several key ways:
\begin{itemize}
    \item In their work, the grounding of a formula is constructed bottom-up from the grounding of its sub-terms.
    In contrast, I use substitution relations, which are also constructed bottom-up but then pushed down into quantified formulas.
    This two-way process is essential for grounding quantifications over infinite universes, as it enables the identification of a finite set of relevant variable instantiations for each quantifier.

    \item While their implementation, MXG (now discontinued), reduced formulas to DIMACS (for SAT solvers), \texttt{xmt-lib} targets SMT-LIB.
    Furthermore, \texttt{xmt-lib} supports algebraic data types and arithmetic over the set of reals.
\end{itemize}

\section{Future work}

Other grounding proposal within the theoretical framework could be found to be more efficient than the one proposed here.
Although this work is focused on SMT solvers, the theoretical framework could be adapted
to make proposals for SAT and ASP grounders.
To improve performance for problems with symmetries, one could use a processor before or after the grounder to introduce static symmetry breaking constraints, as is done with ASP solvers~\cite{DBLP:journals/corr/Devriendt016}.
One could also try to incorporate symmetry breaking ideas in a new grounding proposal.
One could also seek to extend the support for inductive definitions.

\bibliography{references}
\end{document}